\documentclass{llncs}
\usepackage{times}
\usepackage{helvet}
\usepackage{courier}
\usepackage{amssymb,amsfonts,latexsym,eurosym,bbm}
\usepackage[cmex10]{amsmath}
\usepackage{graphicx}
\usepackage{tabularx}
\usepackage{url}
\usepackage{verbatim}
\usepackage{enumerate}
\usepackage{multirow}
\usepackage[dvipsnames]{xcolor}
\usepackage{subfig}
\usepackage{makecell}
\usepackage{algorithm}
\usepackage{algorithmicx}
\usepackage[]{algpseudocode}
\usepackage{wrapfig}

\newcommand{\var}[2]{\text{\it var$_{#1}$} (#2)}
\newcommand{\sol}[1]{\text{\it sol} (#1)}
\newcommand{\players}{{\cal P}}
\newcommand{\PBS}[1]{\let\temp=\\#1\let\\=\temp}
\newcommand{\Let}[2]{\State #1 $\gets$ #2}


\setcounter{tocdepth}{3}
\setlength{\parindent}{0pt}

\begin{document}
\frontmatter          
\pagestyle{headings}  
\mainmatter              
\title{A Complete Solver for Constraint Games}
\titlerunning{A Complete Solver for Constraint Games}
%
\author{Thi-Van-Anh Nguyen \and Arnaud Lallouet}

\authorrunning{T.-V.-A. Nguyen and A. Lallouet} 

\tocauthor{}

\institute{
Universit\'e de Caen, GREYC, Campus C\^ote de Nacre,\\
Boulevard du Mar\'echal Juin, BP 5186, 14032 Caen Cedex, France\\
\email{\{thi-van-anh.nguyen,arnaud.lallouet\}@unicaen.fr}
}

\maketitle              

\begin{abstract}\footnote{This work is supported by Microsoft Research grant MRL-2011-046.}
  Game Theory studies situations in which multiple agents having conflicting objectives have to reach a collective decision.  The question of a compact representation language for agents utility function is of crucial importance since the classical representation of a $n$-players game is given by a $n$-dimensional matrix of exponential size for each player.  In this paper we use the framework of {\em Constraint Games} in which CSP are used to represent utilities.  Constraint Programming --including global constraints-- allows to easily give a compact and elegant model to many useful games.  Constraint Games come in two flavors: Constraint Satisfaction Games and Constraint Optimization Games, the first one using satisfaction to define  boolean utilities.  In addition to multimatrix games, it is also possible to model more complex games where hard constraints forbid certain situations.  In this paper we study complete search techniques and show that our solver using the compact representation of Constraint Games is faster than the classical game solver Gambit by one to two orders of magnitude.
\end{abstract}

\section{Introduction}  \label{sec:introduction}
Game theory has proven to be highly successful in modeling interaction of selfish agents \cite{neumann44a,Nash1951}.  In a strategic game, each player is given a set of actions and has to choose one to perform.  A reward is given to a player by an utility function which depends on the actions taken by all players.  One of the best known solution concepts for this type of game is the pure Nash equilibrium (PNE), which occur when no player is able to improve her utility by changing her chosen action to another one.  There are many ways do define solution concepts \cite{OsborneR:Book1994} but PNE has the notable advantage of giving a deterministic decision for the players.  Indeed, PNE for games are similar to solutions for CSP: not all games own a PNE, and when available, some PNE may be more desirable than others.  They are nevertheless a basic tool to study games.

The basic representation of games is a multimatrix {\em called normal} form whose size is exponential in the number of players.  The intractability of this representation is a severe limitation to the widespead of game-based modeling.  This key issue has been adressed by several types of compact representations.  Some are based on some assumptions on the interactions between players, like graphical games \cite{DBLP:conf/uai/KearnsLS01} or action-graph games \cite{DBLP:journals/geb/JiangLB11} while other are language-based, like Boolean Games \cite{DBLP:conf/tark/HarrensteinHMW01,DBLP:conf/jelia/DunneH04,DBLP:conf/ecai/BonzonLLZ06} or Constraint Games \cite{NguyenLB:ICTAI2013}.

We focus our interest here in Constraint Games for which utilities are expressed by Constraint Satisfaction Problems (CSP) or Constraint Optimization Problems (COP).  Constraint Games provide a rich modeling language which allows a natural formulation of the players goals.  In particular, it allows to express in a compact way most classical games like congestion games \cite{Rosenthal:IJGT1973}, network games \cite{DBLP:journals/networks/BouhtouEM07}, strategic scheduling \cite{Vocking:AGT2007}, to name a few.  In addition, hard constraints \cite{Rosen:Econometrica1965} can be provided to limit the joint strategies all players may take.  This allow unrealistic equilibria to be ruled out.  Note that these constraints are global to all players, unlike the local constraints defined in \cite{BonzonLL:Synthese2012}, and they provide crucial expressivity in modeling.

Despite the high modeling interest of games to model strategic interaction, it is still difficult to find a PNE game solver.  The normal form game solver Gambit \cite{Gambit} is currently considered as state-of-the-art.  Also non-compact logic transformations have been studied \cite{DBLP:conf/csl/VosV99,DBLP:conf/asian/FooMB04}.  For Boolean Games, there are techniques limited to specific categories of games like games with acyclic interaction graph \cite{DBLP:journals/ijar/BonzonLL09} or using specific bargaining techniques \cite{DBLP:conf/atal/DunneHKW08}.  For Constraint Games, only a solver based on local search has been proposed \cite{NguyenLB:ICTAI2013}.  Not surprisingly, large games can be solved but with no guarantee of finding an equilibrium or prove the absence of equilibrium.  This is due for a large part to the high complexity of finding a PNE \cite{DBLP:journals/jair/GottlobGS05,DBLP:conf/ecai/BonzonLLZ06}.

Few other works are related to Games and Constraint Programming.  In \cite{DBLP:conf/csclp/BordeauxP04}, it has been proposed to compute a mixed equilibrium using continuous constraints.  Some other formalism try to solve a combinatorial problem by multiple agents, either with a predefined assignment of variables to agents like in DCOP \cite{Faltings2006a} or by letting the agents select dynamically their variable like in SAT-Games \cite{DBLP:conf/sat/ZhaoM04} and Adversarial CSP \cite{DBLP:conf/ecai/BrownLCF04}.

In this paper, we prove that Constraint Games are $\Sigma^p_2$-complete like Boolean Games and then we focus on the problem of finding all PNE for a Constraint Game.  Although finding one PNE is a very interesting problem in itself, finding all of them allows more freedom for choosing equilibria that fulfill some additional requirements.  For example the correctness of the computation of Pareto Nash equilibria rely on the completeness of PNE enumeration.  We present in this paper ConGa, a new correct and complete solver for Constraint Games.  This solver is based on a fast computation of equilibrium condition we call Nash consistency and a pruning algorithm for never best responses.  We demonstrate the effectiveness of our approach on publicly available benchmarks from the Gamut suite \cite{DBLP:conf/atal/NudelmanWSL04} as well as on real-life applications.

\section{Constraint Games} \label{sec:cg}
\newcommand{\proj}[1]{|_{#1}}
Let $V$ be a set of variables and $D = (D_x)_{x \in V}$ be the family of their (finite) domains.  For $W \subseteq V$, we denote by $D^W$ the set of tuples on $W$, namely $\Pi_{x \in W} D_x$.  Projection of a tuple (or a set of tuples) on a variable (or a set of variables) is denoted by $|$: for $t \in D^V$, $t \proj{x} = t_x$, $t \proj{W} = (t_x)_{x \in W}$ and for $E \subseteq D^V$, $E \proj{W} = \{ t \proj{W} ~|~ t \in E \}$.  For $W,U \subseteq V$, the join of $A \subseteq D^W$ and $B \subseteq D^U$ is $A \Join B = \{ t \in D^{W \cup U} ~|~ t \proj{W} \in A ~\wedge~ t \proj{U} \in B \}$.  When $W \cap U = \emptyset$, we denote the join of tuples $t \in D^W$ and $u \in D^U$ by $(t,u)$.
A {\em constraint} $c=(W,T)$ is a couple composed of a subset $W = \var{}{c} \subseteq V$ of variables and a relation $T = \sol{c} \subseteq D^W$ (called {\em solutions}).
A {\em Constraint Satisfaction Problem} (or CSP) is a set of constraints.  We denote by $\sol{C} = \;\Join_{c \in C} \sol{c}$ its set of solutions.  To simplify the exposition, we identify $\sol{C}$ with its cylindric extension to all variables of $V$ (i.e. with any combination of values for the variables which do not belong to $C$).

Let $\players$ be a set of $n$ players and $V$ a finite set of variables.  The set of variables is partitioned into {\em controlled} variables $V_c = \bigcup_{i \in \players} V_i$ where $V_i$ is the subset of variables controlled by Player $i$, and $V_E$ the set of {\em uncontrolled} or {\em existential} variables ($V_E = V \setminus V_c$).
\begin{definition}[Constraint Satisfaction Game] ~
  A {\em Constraint Satisfaction Game} (or CSG) is a 4-tuple $(\players,V,D,G)$ where $\players$ is a finite set of players, $V$ is a finite set of variables composed of a family of disjoint sets $(V_i)$ for each player $i \in \players$ and a set $V_E$ of {\em existential} variables disjoint of all the players variables, $D = (D_X)_{X \in V}$ is the family of their domains and $G = (G_i)_{i \in \players}$ is a family of CSP on $V$.
\end{definition}
The CSP $G_i$ is called the {\em goal} of the player $i$.  The intuition behind CSG is that, while Player $i$ can only control her own subset of variables $V_i$, her satisfaction will depend also on variables controlled by all the other players.
The intuition behind existential variables is that they are existentially quantified (but most of the time they will be functionally defined from decision variables).
A {\em strategy} for player $i$ is an assignment of the variables $V_i$ controlled by player $i$.  A {\em strategy profile} $s=(s_i)_{i \in \players}$ is the given of a strategy for each player.
\begin{definition}[Winning strategy] ~
  A strategy profile $s$ is winning for $i$ if it satisfies the goal of $i$: $s \in \sol{G_i}$.
\end{definition}
A CSG can be interpreted as a classical game with a boolean payoff function which takes value 1 when the player's CSP is satisfied and 0 when not.

We denote by $s_{-i}$ the projection of $s$ on $V_{-i} = V \setminus V_i$.  Given a strategy profile $s$, a player $i$ has a {\em beneficial deviation} if $s \not\in \sol{G_i}$ and $\exists s'_i \in D^{V_i}$ such that $(s'_i,s_{-i}) \in \sol{G_i}$.  Beneficial deviation represents the fact that a player will try to maximize her satisfaction by changing the assignment of the variables she can control if she is unsatisfied by the current assignment.  A tuple $s$ is {\em best response} for Player $i$ if this player is not able to make any beneficial deviation.  Then we define the notion of solution of a CSG by pure Nash equilibrium:
\begin{definition}[Pure Nash Equilibrium] ~
  A strategy profile $s$ is a {\em Pure Nash Equilibrium (or PNE)} of the CSG ${\cal C}$ if and only if no player has a beneficial deviation, i.e. $s$ is {\em best response} of all players. 
\end{definition}

\begin{theorem} \label{prop:sigma}
 CSG are $\Sigma^p_2$-complete.
\end{theorem}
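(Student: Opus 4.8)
The plan is to prove completeness in two independent parts: \emph{membership} in $\Sigma^p_2$ and $\Sigma^p_2$-\emph{hardness}. Throughout I use the characterization that a language lies in $\Sigma^p_2$ iff it is expressible in prenex form $\exists y\, \forall z\, R(x,y,z)$ with $y,z$ of length polynomial in $|x|$ and $R$ decidable in polynomial time. The decision problem attached to a CSG is of course ``does the game admit a PNE?''.

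For membership, I would encode the existence of a PNE directly in this $\exists\forall$ shape. The outer existential block guesses a strategy profile $s \in D^{V_c}$ together with, for each player $i \in \players$, a status bit $b_i$ recording whether $i$ is satisfied at $s$ or is \emph{trapped} (unable to make a beneficial deviation), and, for every player with $b_i=1$, a witness $e_i \in D^{V_E}$ certifying $s \in \sol{G_i}$ (this is where the existential variables $V_E$ get quantified). The universal block then ranges, for each trapped player, over a candidate deviation $s'_i \in D^{V_i}$ together with an assignment $e' \in D^{V_E}$. The polynomial-time matrix $R$ checks that (i) each satisfied player's witness really satisfies $G_i$, and (ii) for each trapped player no $(s'_i, s_{-i}, e')$ satisfies $G_i$. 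Since best response means exactly ``$s \in \sol{G_i}$ \emph{or} no deviation $(s'_i,s_{-i})$ lies in $\sol{G_i}$'', a profile that passes these checks is a PNE, and conversely every PNE yields consistent bits and witnesses (take $b_i=1$ when $i$ is satisfied, $b_i=0$ otherwise). The only subtlety is that $\sol{G_i}$ hides an existential quantifier over $V_E$: satisfaction of a player contributes an extra existential witness $e_i$ that is safely pulled to the front, whereas non-satisfaction of a candidate deviation becomes part of the universal block. This yields membership in $\Sigma^p_2$.

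For hardness, I would reduce from the canonical $\Sigma^p_2$-complete problem, the truth of $\exists X\, \forall Y\, \phi(X,Y)$. The template is a two-player CSG: Player $1$ controls the Boolean-domain variables $X$ with goal $\phi$, while Player $2$ controls $Y$ with the complementary goal $\neg\phi$. A profile $(x,y)$ with $\phi(x,y)$ true is a PNE exactly when Player $2$ cannot escape, i.e. $\forall y'\, \phi(x,y')$, which is precisely a witness for the quantified formula. The \textbf{main obstacle} is the symmetric spurious case: when $\phi(x,y)$ is false, Player $1$ is unsatisfied and Player $2$ satisfied, and $(x,y)$ is then a PNE whenever $\forall x'\, \neg\phi(x',y)$ --- a situation that can occur even when the formula is false, creating a false positive. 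The crux of a direct reduction is therefore to add gadget variables (or an auxiliary player) guaranteeing that an unsatisfied Player $1$ always retains a beneficial deviation, so that these unwanted profiles are never equilibria while the genuine ones survive.

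A cleaner alternative, matching the paper's remark that CSG behave ``like Boolean Games'', is to observe that Boolean Games form a syntactic restriction of CSGs --- Boolean domains and propositional goals read as CSPs --- on which the notion of PNE coincides. The known $\Sigma^p_2$-hardness of Boolean Games then transfers immediately, and the remaining work is the routine verification that this embedding is faithful (a strategy profile is a PNE in the Boolean Game iff it is a PNE in the associated CSG). I expect the gadget design in the direct reduction to be the genuinely delicate step; the embedding route trades that difficulty for a citation plus an easy faithfulness check.
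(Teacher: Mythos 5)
Your membership argument is correct and is in substance the same as the paper's: the paper guesses a strategy profile and performs one coNP deviation-check per player (an NP computation with polynomially many coNP oracle calls), while you phrase the identical content through the $\exists\forall$ characterization of $\Sigma^p_2$. Your explicit treatment of the existential variables $V_E$ --- witnesses pulled into the outer existential block for satisfied players, pushed into the universal block when refuting deviations of trapped players --- is a careful point that the paper glosses over, and it is handled correctly.

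On hardness, however, your direct reduction stops exactly where the paper's proof has its one real idea, so as written that route is not a proof. The paper reduces from $\exists X \forall Y\, C$ ($\exists\forall$-QCSP) to a \emph{two-player zero-sum} CSG by adding a matching-pennies gadget: a fresh variable $x$ controlled by player $1$ and $y$ controlled by player $2$, with $D_x = D_y$ of size at least $2$; player $1$'s goal is $C \vee (x=y)$ and player $2$'s goal is its negation. This is precisely the gadget you say must exist but do not construct, and it kills the spurious equilibria you identified: if $Q$ is valid, player $1$ plays the witness for $X$ and is winning whatever player $2$ does, so player $2$ has no beneficial deviation and a PNE exists; if $Q$ is invalid, then from any profile either player $1$'s goal holds --- and player $2$ can deviate by choosing $s'_2$ falsifying $C$ together with $y \neq x$ --- or it fails --- and player $1$ can deviate by setting $x$ equal to $y$ --- so no profile is a PNE. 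Your second route (embedding Boolean games into CSGs and citing their known $\Sigma^p_2$-hardness) does close the argument and is genuinely different from the paper, which \emph{adapts} the Boolean-games proof to the QCSP setting rather than invoking the result, thereby staying self-contained and establishing hardness already for two-player zero-sum games over arbitrary finite domains. One caveat you should make explicit if you take that route: a Boolean-game goal is an arbitrary propositional formula, and reading it ``as a CSP'' of polynomial size requires either that constraints may be given intensionally (which the paper's own usage suggests) or a Tseitin-style decomposition whose auxiliary variables are placed in $V_E$; calling Boolean games a ``syntactic restriction'' of CSGs oversells the immediacy of that embedding, even though the faithfulness check itself is indeed routine.
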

\begin{proof}
  The proof is adapted from the $\Sigma^p_2$-completeness of boolean games \cite{DBLP:conf/ecai/BonzonLLZ06}.
  
  Membership comes from the simple algorithm in which one guesses a strategy profile and checks that no player has a beneficial deviation.  Each verification consists in proving that a player $i$ has no solution if the strategy profile is not winning for $i$.  This verification is in coNP because for a strategy profile $s$, proving that there exists a solution for Player $i$ amounts to solve the CSP $G_i$, which is in NP.  Since the number of players is finite, there is a polynomial number of calls to a coNP oracle (actually one for each player) and thus the problem is in $\Sigma^p_2$.
  
  For hardness, we introduce a special case of CSG: the 2-players 0-sum game.  In this kind of game, when one player wins, the other player looses.  Thus it is enough to represent only the goal of the first player, the other one being deduced by negation.  Such a CSG can be represented by $(\players=\{1,2\},V,D,C)$ where $C$ is the goal of player $1$ (the goal of Player $2$ is straightforwardly deduced by negation).
  
  We perform a reduction from a $\exists\forall$-QCSP to a 2-players 0-sum CSG.  $\exists\forall$-QCSP are known to be $\Sigma^p_2$-complete.  This reduction proves that even 2-players 0-sum CSG are at least as hard than solving a $\exists\forall$-QCSP.  Together with membership of the $\Sigma^p_2$ class, it gives the exact complexity for $n$-players CSG.

  The reduction is from the QCSP $Q = \exists X \forall Y C$ where $X$ and $Y$ are disjoint sets of variables to the 2-players 0-sum CSG $G = (\{1,2\},X \cup Y \cup \{x,y\},(D,D_x,D_y),C \vee (x=y))$ where $x$ is a new variable controlled by player 1, $y$ a new variable controlled by player 2 and $D_x=D_y$ are domains composed at least of 2 elements.  It is obvious that the conversion can be performed in polynomial time.  If $Q$ is valid, then let $s_1$ be the assignment of variables of $X$ and let $s_2$ be an assignment of variables of $Y$.  Because $Q$ is valid, $\forall s'_2 \in D^Y$, $(s_1,s'_2) \in sol(C)$.  Thus $(s_1,s_2)$ is a PNE because player 1 is winning and player 2 has no beneficial deviation.  Conversely, if $Q$ is not valid then for any assignment $s_1 \in D^X$ of player 1, player 2 can play $s'_2 \in D^Y$ such that $(s_1,s'_2) \not\in sol(C)$.  Then if player 1 plays $x=v$ and if $(s_1,s_2) \in sol(C)$, then player 2 can play $s'_2$ and  $y=w$ with $w \neq v$.  Thus player 2 has a beneficial deviation and $(s_1,s_2,v,w)$ is not an equilibrium.  If $(s_1,s_2) \not\in sol(C)$ and player 2 plays $y=w$, then player 1 can play $x=w$ and player 1 has a beneficial deviation.  Thus $(s_1,s_2,w,w)$ is not an equilibrium.  In conclusion, $G$ has a PNE if and only if $Q$ is valid, proving the $\Sigma^p_2$-hardness. $\Box$
\end{proof}

The players goals could be considered as soft constraints or preferences. It may happen however some games have rules that forbid some strategy profiles as they model impossible situations.  It is natural to reject such profiles by setting {\em hard constraints} shared by all players \cite{Rosen:Econometrica1965}.  
Hard constraints can be easily expressed in the framework of Constraint Games by adding an additional CSP on the whole set of variables in order to constrain the set of possible strategy profiles:
\begin{definition}[CSG with Hard Constraints] ~
  A {\em Constraint Satisfaction Game with Hard Constraints} (or CSG-HC) is a 5-tuple $(\players,V,D,C,G)$ where $(\players,V,D,G)$ is a CSG and $C$ is a CSP on $V$.
\end{definition}
It is useful to distinguish a strategy profile which does not satisfy any player's goal from a strategy profile which does not satisfy the hard constraints.  The former can be a PNE if no player has a beneficial deviation while the latter cannot.  Therefore hard constraints provide an increase of modelling expressibility (without however changing the general complexity of CSG).

By adding an optimization condition it is possible to represent classical games.  A {\em Constraint Optimization Game} (or COG) is an extension of CSG in which each player tries to optimize her goal.  This is achieved by adding for each player $i$ an optimization condition $\min(x)$ or $\max(x)$ where $x \in V$ is a variable to be optimized by Player $i$.
\begin{definition}[Constraint Optimization Game] ~~
  A {\em Constraint Optimization Game} (or COG) is a 5-tuple $(\players,V,D,G,opt)$ where $(\players,V,D,G)$ is a CSG and $opt = (opt_i)_{i \in \players}$ is a family of optimization conditions for each player of the form $\min(x)$ or $\max(x)$ where $x \in V$.
\end{definition}
A winning strategy for player $i$ is still a strategy profile which satisfies $G_i$.  However, the notion of beneficial deviation needs to be slightly adapted.  We denote by $<_{opt_i}$ the (partial) order on strategy profiles such that $s <_{opt_i} s'$ if $s_{-i} = s'_{-i}$ and $s \proj{x} < s' \proj{x}$ when $opt_i = \min(x)$ (resp. $s \proj{x} > s' \proj{x}$ when $opt_i = \max(x)$).  Given a strategy profile $s$, a player $i$ has a {\em beneficial deviation} if $\exists s'_i \in D^{V_i}$ such that $s' = (s'_i,s_{-i}) \in \sol{G_i}$ and $s' <_{opt_i} s$.  Given this, the notion of solution is the same as for CSG.  In addition, COG can be extended with hard constraints the same way CSG are, yielding COG-HC.

\section{Modeling with Constraint Games}  \label{sec:examples}
In this section, we show that complex games can be easily expressed using constraint games.

\begin{example}[Location Game] \label{ex:location}
In this example inspired by \cite{Hotelling:EJ1929}, $n$ ice cream vendors from a set $\players = \{1,2,..,n\}$ want to choose a location numbered from $1$ to $m$ for their stand in a street.  Each seller $i$ wants to find a location $l_i$.  She already has fixed the price of her ice cream to $p_i$ and we assume there is a customer's house at each location.  No two vendors may choose the same location.  The customers choose their vendor by minimizing the sum of the distance between their house and the seller plus the price of the ice cream.

A possible situation for $3$ sellers and 14 customers is depicted in Figure \ref{fig:location}.  The strategy profile depicted at the top of the figure is not an equilibrium since the left player can deviate and ``steal'' a customer to the middle player by shifting two positions on the right. 

\begin{figure} 
 \centering
 \includegraphics[height=3.5cm]{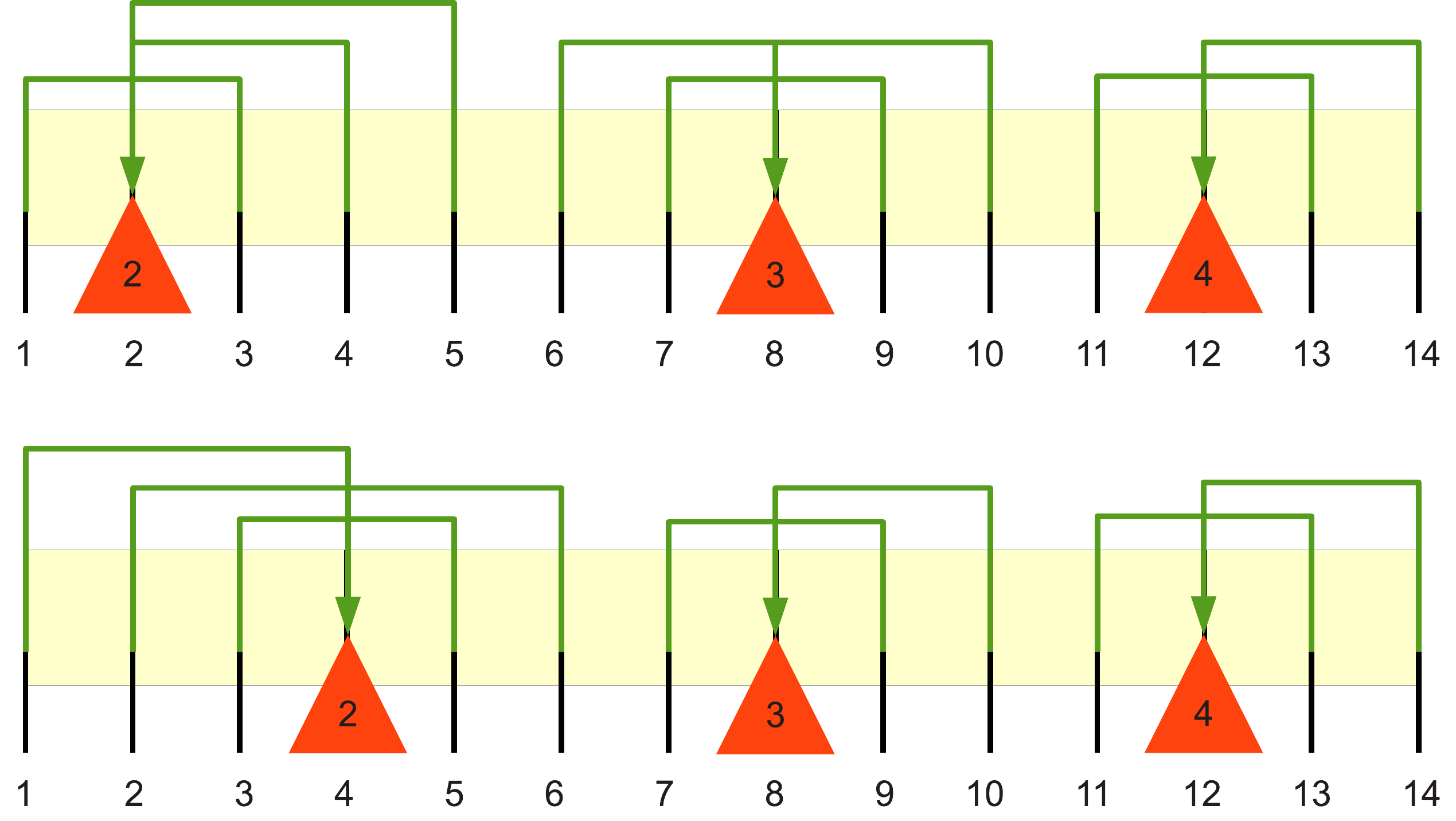}
 \caption{Location game.  Arrows depict the behavior of customers, triangles represent ice cream vendors with their selling price.  In the first situation on top, Player $1$ can improve benefits by shifting two places to the right, giving situation 2 at bottom.}
 \label{fig:location}
\end{figure} 

In order to build the model, we need the following existential variables (which are functionally determined by the decision variables $l_i$):
\begin{itemize}
  \item cost$_{ic}$: defines the cost customer $c$ has to pay if she chooses the stand of seller $i$.
  \item min$_c$: defines the minimal cost customer $c$ has to pay for an ice cream.
  \item choice$_{ic}$: boolean variable which is 1 if customer $c$ chooses seller $i$.
  \item benefit$_i$: defines the number of customers actually buying from seller $i$.
\end{itemize}
The Location Game (LG) can be easily modelled by a COG-HC in which each seller wants to maximize her profit:
\begin{itemize}
  \item $\players =\{1,\dots, n\}$
  \item $\forall i \in \players, V_i = \{l_i\}$
  \item $\forall i \in \players, D(l_i) = \{1, \ldots, m\}$
  \item the hard constraints $C$ are the following:
    \begin{itemize}
      \item no two vendors are located at the same place: {\em all\_different$(l_1,l_2,\ldots{},l_n$)}
      \item $\forall i \in \players, \forall c \in [1..m]$, cost$_{ic} = |c - l_i| + p_i$
      \item $\forall c \in [1..m]$, min$_c = \min($cost$_{1c}, \ldots, $cost$_{nc})$
      \item $\forall c \in [1..m], ($min$_c = $cost$_{ic}) \leftarrow ($choice$_{ic} = 1)$.  Because it may happen that a customer gets the same price from two sellers, we simply enforce one side implication.  Then the sum constraint on choices ensures that a customer only visits one seller.
      \item $\forall c \in [1..m], \sum_{i \in \players} $choice$_{ic} = 1$
  \end{itemize}
  \item $\forall i \in \players$, $G_i$ contains the following constraint: benefit$_i = p_i.\sum_{c \in [1..m]} $choice$_{ic}$
  \item $\forall i \in \players$, the optimization condition $Opt_i = \max($benefit$_i)$
\end{itemize}
An interesting feature of this example is that it uses global constraints like {\em all\_different} the same way as in Constraint Programming.  It also show the interest of modeling hard constraints in games since it is perfectly natural to think that no two vendors can settle at the same place.  It is possible to transform this problem into a CSG by fixing a minimal profit $mp_i$ for each player $i$ and stating that player $i$ is satisfied if her benefits is over $mp_i$.  It can be done by adding the constraint benefit$_i \geq mp_i$ to $G_i$ instead of the optimization condition.  In the Gamut \cite{DBLP:conf/atal/NudelmanWSL04} version of the game, vendors do not choose location but prices, because there is no way to express that sellers should choose different locations in a normal form game like we do here with the {\em all\_different} constraint.
\end{example}

\begin{example}[Cloud Resource Allocation Game] \label{ex:cloud} 
Resource allocation is a central issue in cloud computing where clients use and pay computing resources on demand.  In order to manage conflicting interests between clients, \cite{JalapartiNGC:TechReport2010} has proposed the framework of {\em CRAG} (Cloud Resource Allocation Game) in which resource assignments are defined by game equilibrium. 

A cloud computing provider owns a set ${\cal M} = \{M_1, \ldots, M_m\}$ of $m$ machines, each machine $M_j$ having a capacity $c_j$ representing the amount of resource available (for example CPU-hour, memory).  The cost of using machine $j$ is given by $l_j(x) = x \times u_j$ where $x$ is the number of resources requested and $u_j$ some unit cost.  A set of $n$ clients  $\players = \{1,2,..,n\}$ wants to use simultaneously the cloud in order to perform tasks.  Client $i \in \players$ has $m_i$ tasks $\{T_{i1},...,T_{im_i}\}$ to perform, with respective requested capacity of $\{d_{i1},...,d_{im_i}\}$.  Each client $i \in \players$ chooses selfishly an allocation $r_{ik}$ for the task $T_{ik}$ ($k \in 1..m_i$) and wishes to minimize her cost $cost_i = \sum_{k = 1..m_i} l_{r_{ik}}(d_{ik})$.  We assume that the provider's resources amount is sufficient to accommodate the requests of all the clients: $\sum_{i \in [1..n]} \sum_{k \in [1..m_i]} d_{ik} \leq \sum_{j \in [1,..m]} c_j$.  This problem can be modelled by the following COG-HC:
\begin{itemize}
  \item $\players =\{1,..,n\}$
  \item $\forall i \in \players, V_i = \{r_{i1},..., r_{im_i}\}$
  \item $\forall i \in \players, \forall k \in [1,...,m_i], D(r_{ik}) = \{1, \ldots, m\}$
  \item $C$ is composed of the following constraints:
    \begin{itemize}
      \item channelling constraints for boolean variables stating that machine $j$ is requested by task $t_{ik}$: $(r_{ik} = j) \leftrightarrow (choice_{ijk} = 1)$
      \item capacity constraints: $\forall j \in [1,..,m]$, $\sum_{i \in [1..n]} \sum_{k \in [1..m_i]} choice_{ijk} \times d_{ik} \leq c_j$
    \end{itemize}
  \item  $\forall i \in \players, G_i$ is composed of the following constraint: $$cost_i = \sum_{j = 1..m} \sum_{k = 1..m_i} choice_{ijk} \times l_j(d_{ik})$$
  \item  $\forall i \in \players, Opt_i = \text{Minimize } (cost_i)$ 
\end{itemize}
\end{example}

Other interesting examples can be modeled by Constraint Games like network game \cite{DBLP:journals/networks/BouhtouEM07}, strategic scheduling \cite{Vocking:AGT2007}, or games from the Gamut suite \cite{DBLP:conf/atal/NudelmanWSL04}.

\section{Pruning techniques} \label{sec:pruning}
A natural algorithm is to use generate and test to find an equilibrium.  This naive algorithm is however the only known algorithm for finding PNE \cite{Turocy:PC2013} and from the complexity result, it is unlikely that any fast (polynomial) algorithm could exist.  This algorithm is therefore the basis of the implementation of the Gambit solver \cite{Gambit} for PNE enumeration.   We first show that this technique is subject to a form of trashing.  In order to simplify the exposition, we assume in the remaining of the paper that each player $i$ only controls one variable $x_i$ with domain $D_i$.  The extension to more than one variable per player is not difficult (indeed our solver Conga does not have this limitation since many examples require a player to control several variables).

The {\em enum1} algorithm (Algorithm \ref{alg:naive-solve}) consists in enumerating all strategy profiles, testing each of them for each player for deviation and skipping to the next profile when the first deviation is found.
\begin{algorithm}
  \caption{enum1 \label{alg:naive-solve}}
  \begin{algorithmic}[1]
    \Statex
    \Function{enum1}{Game $CG$}: {\bf setof} tuples
      \Let{Nash}{$\emptyset$}
      \For{$s \in D^{V_c}$}
        \If{IsNash($s$)}
          \State{Nash = Nash $\cup$ $\{s\}$}
        \EndIf
      \EndFor
      \State \Return{Nash}
    \EndFunction
    \Statex
    \Function{IsNash}{tuple $s$}: boolean
      \For{$i \in \players$}
        \For{$v \in D_i, v \neq s_i$}
          \If{$(s_{-i},v) <_{opt_i} s$}
            \State{\Return{false}}
          \EndIf
        \EndFor
      \EndFor
      \State \Return{true}
    \EndFunction
  \end{algorithmic}
\end{algorithm}

The following example shows that some deviations are performed several times.
\begin{example}
 Let $G$ be the 2-players game defined by the following table:
 \begin{center}
  \begin{tabular}{|>{\PBS{\centering}}m{5mm}@{\,}|>{\PBS{\centering}}m{5mm}@{\,}|>{\PBS{\centering}}m{15mm}@{\,}|@{\,}>{\PBS{\centering}}m{15mm}@{\,}|@{\,}>{\PBS{\centering}}m{15mm}@{\,}|} \hline
    \multicolumn{2}{|c|}{} & \multicolumn{3}{c|}{$y$} \\ \cline{3-5}
    \multicolumn{2}{|c|}{}       & 1              & 2     & 3      \\ \hline
    \multirow{3}{*}{$x$}     & a & (0,1)$_\alpha$ & (1,0) & (1,0)  \\ \cline{2-5}
                             & b & (0,1)$_\beta$  & (0,0) & (1,0)  \\ \cline{2-5}
                             & c & (1,0)          & (1,1) & (0,0)  \\ \hline
  \end{tabular}
 \end{center}
 We assume that the enumeration starts by Player $x$.  The first tuple to be enumerated is $(a,1)$ denoted by $\alpha$.  Deviation is checked for Player $y$ and no deviation is found.  Then deviation is checked for Player $x$ and a deviation towards $(c,1)$ is found.  Thus this tuple is not a PNE.  The next candidate is $(b,1)$ denoted by $\beta$.  This tuple is checked for Player $y$ and again no deviation is found.  But when checked for Player $x$, the same deviation towards $(c,1)$ is found as for tuple $\alpha$.
\end{example}

This form of trashing is a strong motivation to investigate search and pruning techniques for Constraint Games.  
To introduce our technique, we first recall \cite{DBLP:journals/jair/GottlobGS05} where the authors introduce (originally for graphical games) a CSP composed of {\em Nash constraints} to represent best responses for each player.
\begin{definition}[GGS-CSP] ~
  Let $CG=(\players,V,D,G,opt)$ be a COG.  The {\em Nash constraint} of player $i \in \players$ is $g_i = (V_c,T)$ where $T = \{t \in D^{V} ~|~ \not\!\!\exists t' \in D^{V} \mbox{ s.t. } t' <_{opt_i} t \}$.  The {\em GGS-CSP} ${\cal G}(CG)$ of $CG$ is the set of Nash constraints for all players.
\end{definition}
This CSP has the important property that it defines the PNE of the game:
\begin{theorem} (\cite{DBLP:journals/jair/GottlobGS05}) \label{th:GGS}
  $t$ is a PNE of $CG$ $\leftrightarrow$ $t \in sol({\cal G}(CG))$
\end{theorem}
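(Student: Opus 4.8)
The plan is to prove the equivalence by a definitional unfolding that reduces both sides to the common notion of best response and matches them player by player. First I would observe that all Nash constraints $g_i = (V_c, T_i)$ share the same scope $V_c$, so the join collapses to an intersection: $\sol{{\cal G}(CG)} = \Join_{i \in \players} \sol{g_i} = \bigcap_{i \in \players} \sol{g_i}$. Consequently $t \in \sol{{\cal G}(CG)}$ if and only if $t \in \sol{g_i}$ for every player $i \in \players$, which turns the global statement into a conjunction of per-player conditions.

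The core step is to show, for a fixed player $i$, that $t \in \sol{g_i}$ if and only if $t$ is a best response for $i$. Unfolding the relation defining $g_i$, membership $t \in T_i$ states that there is no profile $t'$ (ranging over the winning profiles $\sol{G_i}$, to agree with the beneficial-deviation definition) with $t' <_{opt_i} t$. The crucial observation is that $<_{opt_i}$ only relates profiles that agree on $V_{-i}$ and differ in the variable optimized by $i$; hence any candidate $t'$ with $t' <_{opt_i} t$ is exactly of the form $(s'_i, t_{-i}) \in \sol{G_i}$ that strictly improves $i$'s objective. Therefore ``no such $t'$ exists'' is literally the negation of ``player $i$ has a beneficial deviation at $t$'', i.e. the statement that $t$ is a best response for $i$. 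Chaining this over all players and invoking the definition of PNE gives $t \in \sol{{\cal G}(CG)}$ iff $t$ is a best response for every $i$ iff $t$ is a PNE of $CG$, and since every step is an equivalence both implications follow at once.

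I expect the main obstacle to be bookkeeping rather than a genuine difficulty. One must ensure the relation $T_i$ encodes the goal $G_i$ correctly, so that the improving profiles $t'$ under consideration are precisely the winning ones $(s'_i, t_{-i}) \in \sol{G_i}$ and not arbitrary tuples; and one must treat the existential variables $V_E$ consistently with the cylindric-extension convention, reading $t \in \sol{G_i}$ as the existence of an assignment of $V_E$ completing $t$ into a solution of $G_i$. Once the orders and projections are set up so that a deviation acts only on $V_i$ while fixing $V_{-i}$, the equivalence is a direct definitional chase requiring no further machinery.
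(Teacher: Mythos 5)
The paper offers no proof of this theorem: it is imported verbatim from Gottlob, Greco and Scarcello \cite{DBLP:journals/jair/GottlobGS05}, so there is no internal argument to compare yours against. Your definitional unfolding is correct and is essentially the only proof there is: since all Nash constraints share the scope $V_c$, the join collapses to an intersection, and $\sol{g_i}$ is by construction the set of best responses of player $i$, so membership in all of them is literally the definition of a PNE. The one substantive point is the one you flag yourself: as written in the paper, the relation $T$ of $g_i$ quantifies over arbitrary tuples $t' \in D^V$ with $t' <_{opt_i} t$, with no requirement that $t' \in \sol{G_i}$; under that literal reading the equivalence would fail, because a profile can be a PNE for player $i$ (no \emph{winning} improving deviation) while some losing profile still has a better value on $i$'s objective variable. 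Restricting the improving profiles to winning ones $(s'_i, t_{-i}) \in \sol{G_i}$, as you do, is exactly what is needed to align $T$ with the paper's definition of beneficial deviation, and with that reading the rest is indeed a direct definitional chase, including the cylindric-extension handling of the existential variables.
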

Then it follows that a PNE of a Constraint Game $CG$ has a support in all of its Nash constraints.

Our technique consists to perform a traversal of the search space by assigning the variables of each player in turn according to a predefined ordering on $\players$.  For each candidate tuple, we seek for supports by performing an incremental computation of the Nash constraints.  Each computed deviation is recorded in a table for each player.  By retrieving tuples in Nash constraints, we can avoid computing costly deviations.

However, since we are studying general games, each Nash constraint has the same arity as the whole problem, which is challenging in terms of space requirements.  First, note that any tuple deleted from a table does not hinder the correctness of the Nash test.  It may only forces a deviation to be computed twice.  Hence we are free to limit the size of the tables and trade space with time.  In practice, two features limit the size of the tables.

First, deviation checks are performed in reverse player ordering.  It means that a tuple checked for the first player must have succeed the deviation test for all other players.  In practice for most problems, this limits the number of tuples reaching the upper levels.  Second, we can delete a tuple $t$ recorded in a table when we can ensure that no candidate $t'$ will deviate anymore towards $t$.  This property is given by the following independence of subgames theorem.  Let $CG=(\players,V,D,G,opt)$ be a constraint game.  A game $CG'=(\players,V,D',G,opt)$ is a subgame of $CG$ if $\exists i \in \players, D'_i \subseteq D_i$ and $i \neq j \rightarrow D'_j=D_j$.  We denote by $br_i(t) = \{t' \in sol(G_i) ~|~ t'_{-i} = t_{-i} \}$ the set of best response strategies from $t$ for Player $i$.  
\begin{figure}[h]
 \centering
 \includegraphics[height=3cm]{./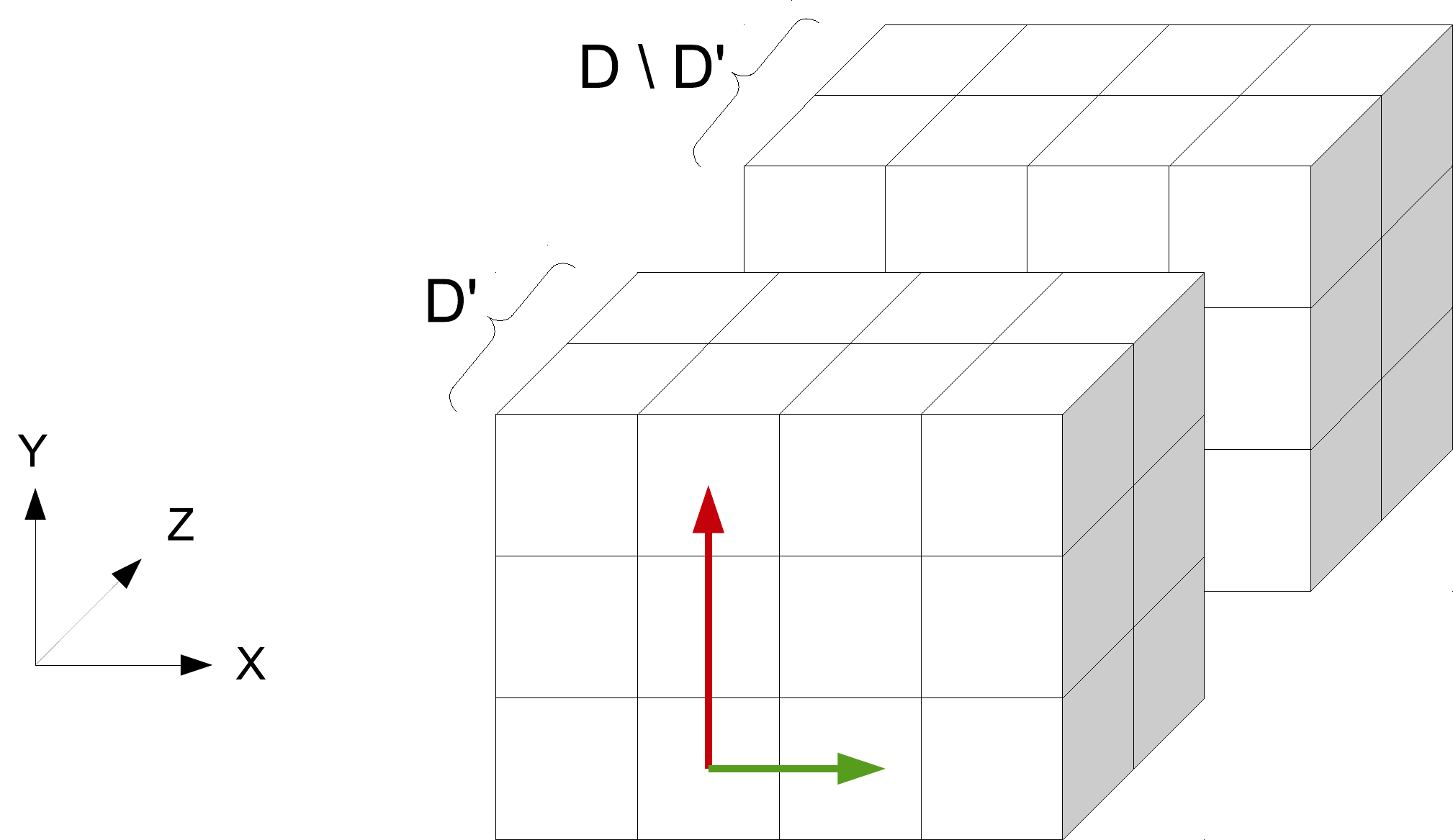}
 \caption{Independence of subgames}
 \label{fig:theorem1}
\end{figure}
\begin{proposition} \label{prop:ios}
 Let $CG$ be a constraint game and $CG'$ a subgame of $CG$ such that $D'_i \subseteq D_i$.  Let $D''=D \setminus D'$, $t' \in D'^V$ and $t'' \in D''^V$.  Then $\forall j \in \players, j \neq i \rightarrow br_j(t') \cap br_j(t'') = \emptyset$.
\end{proposition}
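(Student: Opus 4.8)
The plan is to unwind the definitions and reduce the claim to the single observation that, for $j \neq i$, the move of Player $i$ is part of the frozen context of every best response of Player $j$. First I would fix the reading of $D'' = D \setminus D'$: since $D'$ agrees with $D$ on every component except the $i$-th, where $D'_i \subseteq D_i$, the intended meaning is $D''_i = D_i \setminus D'_i$ and $D''_j = D_j$ for all $j \neq i$. The only fact I need from this is that $D'_i$ and $D''_i$ are disjoint. Consequently any $t' \in D'^V$ and $t'' \in D''^V$ necessarily differ on the variable $x_i$ controlled by Player $i$: $t' \proj{x_i} \in D'_i$ whereas $t'' \proj{x_i} \in D_i \setminus D'_i$.

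The core of the argument is then a one-line contradiction. Fix $j \in \players$ with $j \neq i$ and suppose, for contradiction, that some $s$ lies in $br_j(t') \cap br_j(t'')$. By the definition of $br_j$ we have $s \in \sol{G_j}$ together with $s_{-j} = t'_{-j}$ and $s_{-j} = t''_{-j}$, whence $t'_{-j} = t''_{-j}$. The two profiles thus coincide on all of $V_{-j} = V \setminus V_j$.

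Here I would use the hypothesis $j \neq i$ crucially: because Player $i$'s variable $x_i$ belongs to $V_{-j}$, the equality $t'_{-j} = t''_{-j}$ forces $t' \proj{x_i} = t'' \proj{x_i}$. This contradicts the first step, where these two values were shown to live in the disjoint sets $D'_i$ and $D_i \setminus D'_i$. Hence no such $s$ exists and $br_j(t') \cap br_j(t'') = \emptyset$; since $j \neq i$ was arbitrary this yields the statement.

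I do not expect a genuine obstacle here: the proposition is essentially a bookkeeping lemma isolating why the solver may safely discard a recorded deviation once it leaves the corresponding subgame. The only points demanding care are matching the informal $D \setminus D'$ to the component-wise reading above and noticing that the condition $j \neq i$ is exactly what places $x_i$ inside the context $V_{-j}$ that is shared by both families of best responses.
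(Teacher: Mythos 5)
Your proof is correct and takes essentially the same approach as the paper: both argue by contradiction that a common element $s$ of $br_j(t')$ and $br_j(t'')$ would, because $j \neq i$ places Player $i$'s component in the frozen context $V \setminus V_j$, force the $i$-th components of $t'$ and $t''$ to coincide, contradicting the disjointness of $D'_i$ and $D''_i$. The only cosmetic difference is that the paper phrases the contradiction via $s_i$ belonging to both $D'^{V_i}$ and $D''^{V_i}$, while you compare $t'$ and $t''$ directly; you also make explicit the componentwise reading of $D'' = D \setminus D'$, which the paper leaves implicit.
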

\begin{proof}
  The proof is by contradiction.  Suppose there exists $t' \in D'^V$ and $t'' \in D''^V$ such that $br_j(t') \cap br_j(t'') \neq \emptyset$.  Let $s \in br_j(t') \cap br_j(t'')$.  Then since $s \in br_j(t')$, $s_i \in D'^{V_i}$.  Since $s \in br_j(t'')$, $s_i \in D''^{V_i}$.  Hence the contradiction. $\Box$
\end{proof}
This proposition is illustrated in Figure \ref{fig:theorem1}: if we split the search space following Player $Z$, best responses for Players $X$ and $Y$ are forced to remain in different subspaces.

By applying inductively Proposition \ref{prop:ios} on a sequence of assignments of strategies for Player $1$ to $k$ with $k<n$, we see that two branches of the search tree will not share best responses for the remaining unassigned players.  Hence we can freely remove all tuples from the table of subsequent players once the branch is explored.

The last optimization consists in elimination of {\em never best responses} (NBR).
\begin{definition}
 A strategy $s_i$ for Player $i$ is a {\em never best response} if $\forall t_{-i}, \exists s'_i$ such that $(s'_i,t_{-i}) <_{opt_i} (s_i,t_{-i})$.
\end{definition}
Iterative elimination of NBR is a sound pruning for games \cite{DBLP:conf/tark/Apt05} which additionally is stronger than elimination of strongly dominated strategies.  But unfortunately their detection is very costly in the $n$-players case since it needs to know that this action will never been chosen by Player $i$ for all strategy profiles of the other players.  However, being a NBR in a subgame is a sufficient condition for not being an equilibrium:
\begin{proposition} \label{prop:inbr}
 Let $CG$ be a constraint game and $CG'$ a subgame of $CG$ such that $D'_i \subseteq D_i$.  Let $s_j \in D'_j$ be a NBR in $CG'$ with $j \neq i$.  Then for all $s_{-j}$, if $s=(s_j,s_{-j})$ is a PNE, then $s_i \not\in D'_i$.
\end{proposition}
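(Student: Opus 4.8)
The plan is to argue by contradiction, exploiting the fact that the subgame $CG'$ with $D'_i \subseteq D_i$ differs from $CG$ only in that Player $i$'s strategy set is shrunk, while every other domain $D'_k = D_k$ ($k \neq i$), every goal $G_k$, and every optimization condition $opt_k$ is left untouched. So suppose, for contradiction, that some $s_{-j}$ yields a PNE $s = (s_j, s_{-j})$ of $CG$ while $s_i \in D'_i$. The first thing I would check is that the opponents' profile $s_{-j}$ is a \emph{legal} profile of $CG'$: indeed $s_i \in D'_i$ by assumption, and for every $k \neq i$ we have $s_k \in D_k = D'_k$ because the subgame only touches Player $i$; hence each coordinate of $s_{-j}$ lies in the corresponding $CG'$-domain.

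Next I would instantiate the NBR hypothesis. Since $s_j$ is a never best response \emph{in $CG'$}, the defining quantifier $\forall t_{-j}$ ranges over the opponents' profiles legal in $CG'$, and $s_{-j}$ is exactly such a profile by the previous step. Applying the definition with $t_{-j} = s_{-j}$ therefore produces a strategy $s'_j \in D'_j = D_j$ (Player $j$'s domain is intact, as $j \neq i$) such that $(s'_j, s_{-j}) <_{opt_j} (s_j, s_{-j}) = s$ and, reading ``never best response'' in its intended sense, with $(s'_j, s_{-j}) \in \sol{G_j}$, i.e. a genuine beneficial deviation for Player $j$. I would then transfer this deviation back to $CG$: because $s'_j \in D_j$ and $G_j, opt_j$ coincide in $CG$ and $CG'$, the profile $(s'_j, s_{-j})$ is a beneficial deviation for Player $j$ in the full game as well. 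This contradicts the hypothesis that $s$ is a PNE, since by the PNE definition no player may have a beneficial deviation; hence $s_i \notin D'_i$, as claimed.

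The step I expect to be the crux is the extraction of a \emph{bona fide} beneficial deviation from the NBR property in the second paragraph. The NBR definition as written only asserts an $<_{opt_j}$-improving strategy, whereas the PNE condition is phrased in terms of beneficial deviations, which additionally require the deviating profile to satisfy $G_j$; so one must read ``never best response'' game-theoretically, namely that for every opponents' profile $(s_j, t_{-j})$ fails to be a best response and thus admits a deviation that is simultaneously goal-satisfying and strictly $opt_j$-improving. Once that reading is fixed, the only remaining care is the domain bookkeeping: the hypotheses $D'_i \subseteq D_i$ and $j \neq i$ are used precisely to guarantee that $s_{-j}$ lives in $CG'$ (needing $s_i \in D'_i$) while the witness $s'_j$ lives in $CG$ (needing $D'_j = D_j$), and these are the two pivots on which the whole argument turns.
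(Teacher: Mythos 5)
Your proof is correct and follows essentially the same route as the paper's: argue by contradiction, instantiate the never-best-response property of $s_j$ in $CG'$ at the opponents' profile $s_{-j}$ (legal in $CG'$ precisely because $s_i \in D'_i$), and conclude that the resulting $<_{opt_j}$-improving strategy $s'_j$ contradicts $s$ being a PNE. Your extra care about the domain bookkeeping and about reconciling the NBR definition with the beneficial-deviation definition is a more explicit treatment of points the paper's proof passes over silently, but it is the same argument.
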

\begin{proof}
  The proof is by contradiction.  Suppose there exists a PNE $s=(s_j,s_{-j})$ with $(s_{-j})_i \in D'_i$.  Because $s$ is a PNE, we have $\forall k \in \players, s_k \in br(s)|_k$.  Then because $s_j$ is a NBR for $j$ in $CG'$, there exists $s'_j$ such that $(s'_j,s_{-j}) <_{opt_j} (s_j,s_{-j})$.  Thus $s_j \not\in br(s)|_j$.  $\Box$
\end{proof}
By applying inductively Proposition \ref{prop:inbr} on a sequence of assignments of strategies for Player $1$ to $k$ with $k<n$, we see that if we detect that a value $v$ is NBR for player $k$ in the subgame defined by the sequence of assignments, then this value will not participate to a PNE and we can prune it.

\section{An algorithm for Nash equilibria enumeration} \label{sec:enum}

\setlength{\textfloatsep}{10pt plus 1.0pt minus 2.0pt}
\begin{algorithm}[H]
  \caption{ConGa \label{alg:cg-solve}}
  \begin{algorithmic}[1]
    \Statex
    \State{\bf global:}
    \State{~~~~~~$BR$: array[1..n] of tuples}    \Comment{\em best responses for all players}
    \State{~~~~~~$cnt$: array[1..n] of integer}  \Comment{\em counters for NBR detection}
    \State{~~~~~~$Nash$: set of tuples}          \Comment{\em Nash equilibria}
    \State{~~~~~~$S$: global solver} 
    \Statex
    \Function{ConGa}{Game $CG$}: {\bf setof} tuples
      \Let{Nash}{$\emptyset$}
      \State{Initialize solver $S$ with hard constraints}
      \Let{$A$}{$D$}
      \State{enum($A,1$)}
      \State \Return{Nash}
    \EndFunction
  \end{algorithmic}
\end{algorithm}

We propose a tree-search algorithm for finding all PNE.  The general method is based on three ideas:
\begin{itemize}
  \item all candidates (except those which are detected as NBR) are generated in lexicographic order;
  \item undominated solutions for each player are recorded in a table;
  \item whenever a domination check is performed, it first checks this player's recorded best responses.
\end{itemize}

We assume given an ordering on players from $1$ to $n$.  The main algorithm (Algorithm \ref{alg:cg-solve}) launches the recursive traversal (Algorithm \ref{alg:enum}) starting by Player $1$.  We distinguish the original domains of the variables (called $D$) used to compute deviations from their actual domain explored by the search tree (called $A$) and subject to pruning by arc-consistency on hard constraints.

Propagation of hard constraints allows to ensure that no forbidden tuple will be explored.  If the propagation returns {\em false}, then at least one domain has been wiped out and there is no solution in this subspace.  Otherwise domains $A$ are reduced according to arc-consistency.  Values for each player are then recursively enumerated.  When a tuple is reached, it is checked for Nash condition (line 5) by Algorithm \ref{alg:checkNash}.  Otherwise, at least one domain remains to be explored.  Each player $i$ owns a table $BR[i]$ of best responses, initialized empty and a counter $cnt[i]$ initialized with the size of the subspace needed to detect potential never best responses.  For the sake of efficiency, the table $BR[i]$ is actually implemented by a search tree.  Hence insertion and search are done in $O(|\players|)$.  After the recursive call of {\em enum}, we test whether all the subspace after Player $i$ has been checked for deviation.  Then all subsequent values are {\em never best responses}.  In this case an exit from the loop causes backjumping to the ancestor node.  This backjumping is done after the exploration by {\em checkEndOfTable} (Algorithm \ref{alg:checktable}) of the potential values which are stored in the table and belong to the unexplored space (lines 14-17).

\begin{algorithm}[H]
  \caption{enum \label{alg:enum}}
  \begin{algorithmic}[1]
    \Statex
    \Procedure{enum}{domains $A$, int $i$}
      \Let{status}{$S$.propagate($A$)}
      \If{status}
        \If{$i > n$}
          \State{checkNash(tuple($A$),$n$)}
        \Else
          \Let{BR[$i$]}{$\emptyset$}
          \Let{cnt[$i$]}{$\Pi_{j>i} |D_j|$}
          \While{$A_i \neq \emptyset$}
            \State{choose $v \in A_i$}
            \Let{$B$}{$A$}
            \State{enum($(B_{-i},(B_i=\{v\})),i+1$)}
            \Let{$A_i$}{$A_i - \{v\}$}
            \If{$cnt$[$i$] $\leq 0$}
              \State{checkEndOfTable($A,i$)}
              \State{break}
            \EndIf
          \EndWhile
        \EndIf
      \EndIf
    \EndProcedure
  \end{algorithmic}
\end{algorithm}

The {\em checkNash} procedure in Algorithm \ref{alg:checkNash} verify whether a player can make a beneficial deviation from a tuple.  Since the exploration of the search tree is done level by level, the verification starts from the deepest level.  First the tuple is searched in the table of stored best response for this player (line 5).  If not found, a solver for $G_i$ is called in function {\em deviation} depicted in Algorithm \ref{alg:deviation} (line 7).  This function returns the set $d$ of deviations for Player $i$ from a tuple $t$.  There can be more than one deviation.  In a CSG, it means that several assignments satisfy the constraints of $G_i$.  In a COG, it means that the optimal value is reached for more than one point.  If $d$ is empty, it means that there is no possible action for Player $i$ which can satisfy the constraints of her goal.  Indeed, a tuple can be an equilibrium even if a (or all) player is unsatisfied.  In this case we return the whole initial domain as deviation: any value can participate to a PNE because the player has no preference.

\begin{algorithm}[H]
  \caption{checkNash \label{alg:checkNash}}
  \small
  \begin{algorithmic}[1]
    \Statex
    \Procedure{checkNash}{tuple $t$, int $i$}
      \If{$i = 0$}
        \Let{$Nash$}{$Nash \cup \{ t \}$}
      \Else
        \Let{d}{search\_table($t,BR,i$)}
        \If{$d = \emptyset$}
          \Let{$d$}{deviation($t,i$)}
          \If{$d = \emptyset$}
            \Let{$d$}{$D_i$}
          \EndIf
          \State{insert\_table($i,BR,d$)}
          \State{$cnt$[$i$] -\,-}
        \EndIf
        \If{$t_i \in d$}
          \State{checkNash($t,i-1$)}
        \EndIf
      \EndIf
    \EndProcedure
  \end{algorithmic}
\end{algorithm}

\begin{algorithm}
  \caption{deviation \label{alg:deviation}}
  \begin{algorithmic}[1]
    \Statex
    \Function{deviation}{tuple $t$, int $i$}: set of {\bf integer}
      \Let{$d$}{$\emptyset$}
      \State{Initialize solver $S_i$ with $G_i$ (and $opt_i$ for a COG)}
      \State{add constraints $x_j = t_j$ for all $j \neq i$}
      \Let{$sol$}{$S_i$.getSolution()}
      \While{$sol \neq nil$}
        \Let{$d$}{$d \cup \{sol\}$}
        \Let{$sol$}{$S_i$.getSolution()}
      \EndWhile
      \State{\Return{$d$}}
    \EndFunction
  \end{algorithmic}
\end{algorithm}

\begin{algorithm}
  \caption{checkEndOfTable \label{alg:checktable}}
  \begin{algorithmic}[1]
    \Statex
    \Procedure{checkEndOfTable}{domain $A$, int $i$}
      \ForAll{$t \in$ $BR$[$i$] such that $t \in \Pi_{i=1..n} A_i$}
        \State{checkNash($t, n$)}
      \EndFor
    \EndProcedure
  \end{algorithmic}
\end{algorithm}

The procedure {\em checkEndOfTable} depicted in Algorithm \ref{alg:checktable} is used when the subspace has been explored and just before performing backjumping.  In this case, all tuples of the table which belong to the unexplored zone are checked for PNE.  An example of backjumping is given in Figure \ref{fig:nbr}.

\begin{figure}[H]
 \centering
 \includegraphics[height=4cm]{./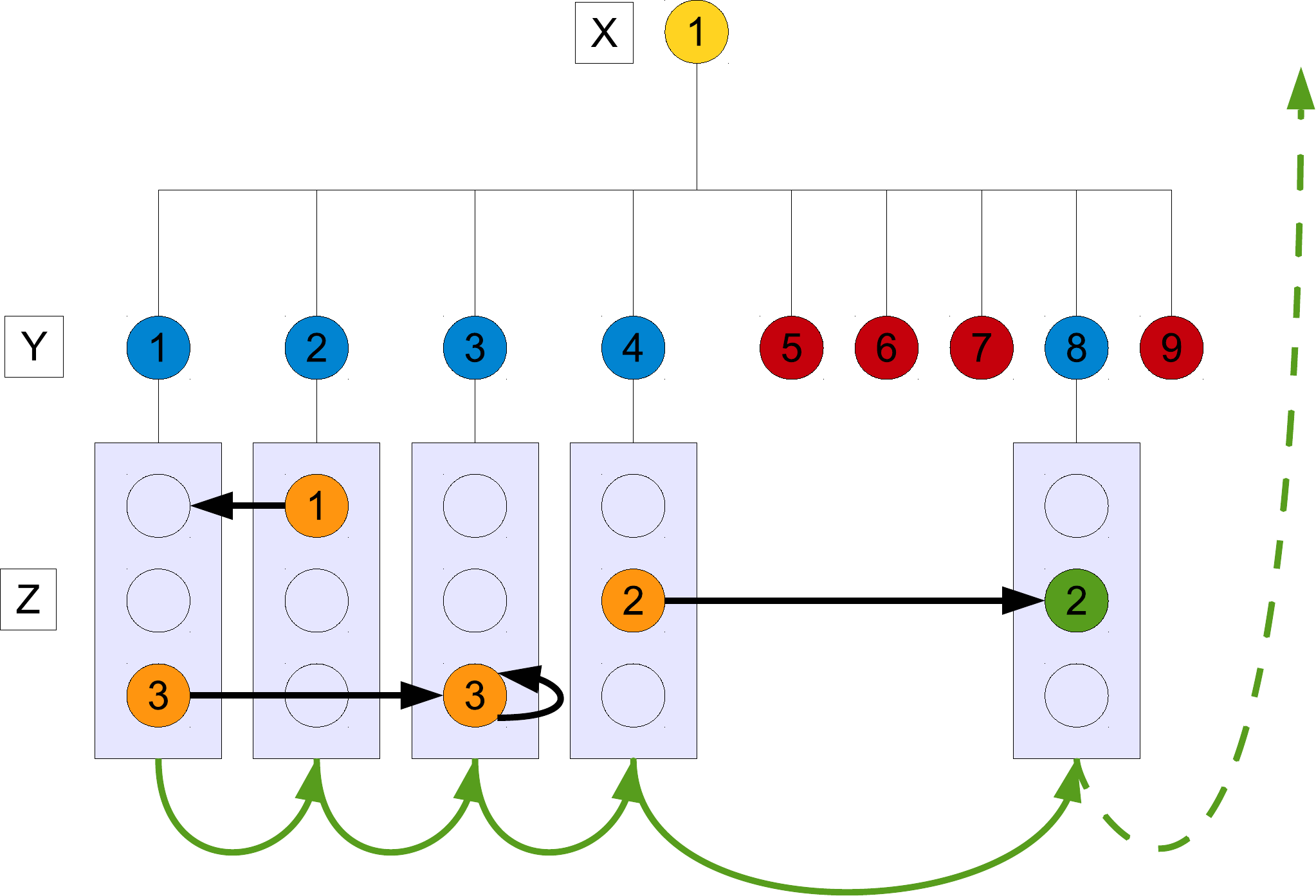}
 \caption{Online detection of never best responses}
 \label{fig:nbr}
\end{figure}
In this example, the domains of player $Y$ and $Z$ are respectively of size 9 and 3.  Hence $cnt[Y]$ is initialized to $3$.  All tested tuples are of the form $1yz$ where $1$ is the value on $X$.  If we suppose that tuples $113$, $121$, $133$ and $142$ are stable for $Z$, these tuples are lifted to $Y$ level to be checked for $Y$'s deviation.  Solid arrows depict the deviations recorded for Player $Y$.  It happens in this example that only by exploring values $1$, $2$, $3$ and $4$ for $Y$ yield a complete traversal of the subspace defined by $Z$ (all values from $Z$'s domain have been considered).  Thus after the exploration of $Y=4$, we know that only $182$ recorded in $Y$'s table can be a PNE with $X=1$.  The other values of Player $Y$ are NBR.  It is sufficient to check this tuple by {\em checkEndOfTable} and we can backtrack to the next value of Player $X$ (dotted arrows).  In general {\em checkEndOfTable} tests all tuples of $BR[Y]$ which belong to the unexplored part of the search space.  This NBR detection is incomplete but comes almost for free because it only takes a counter.  Note that by Proposition \ref{prop:ios}, when exploring $X=2$, the table for $Y$ can be reset because no other tuple will deviate to a tuple where $X=1$.

We propose to follow the Conga algorithm on the small example depicted on Figure \ref{fig:conga}.
\begin{figure}[h]
 \centerline{\includegraphics[width=12cm]{./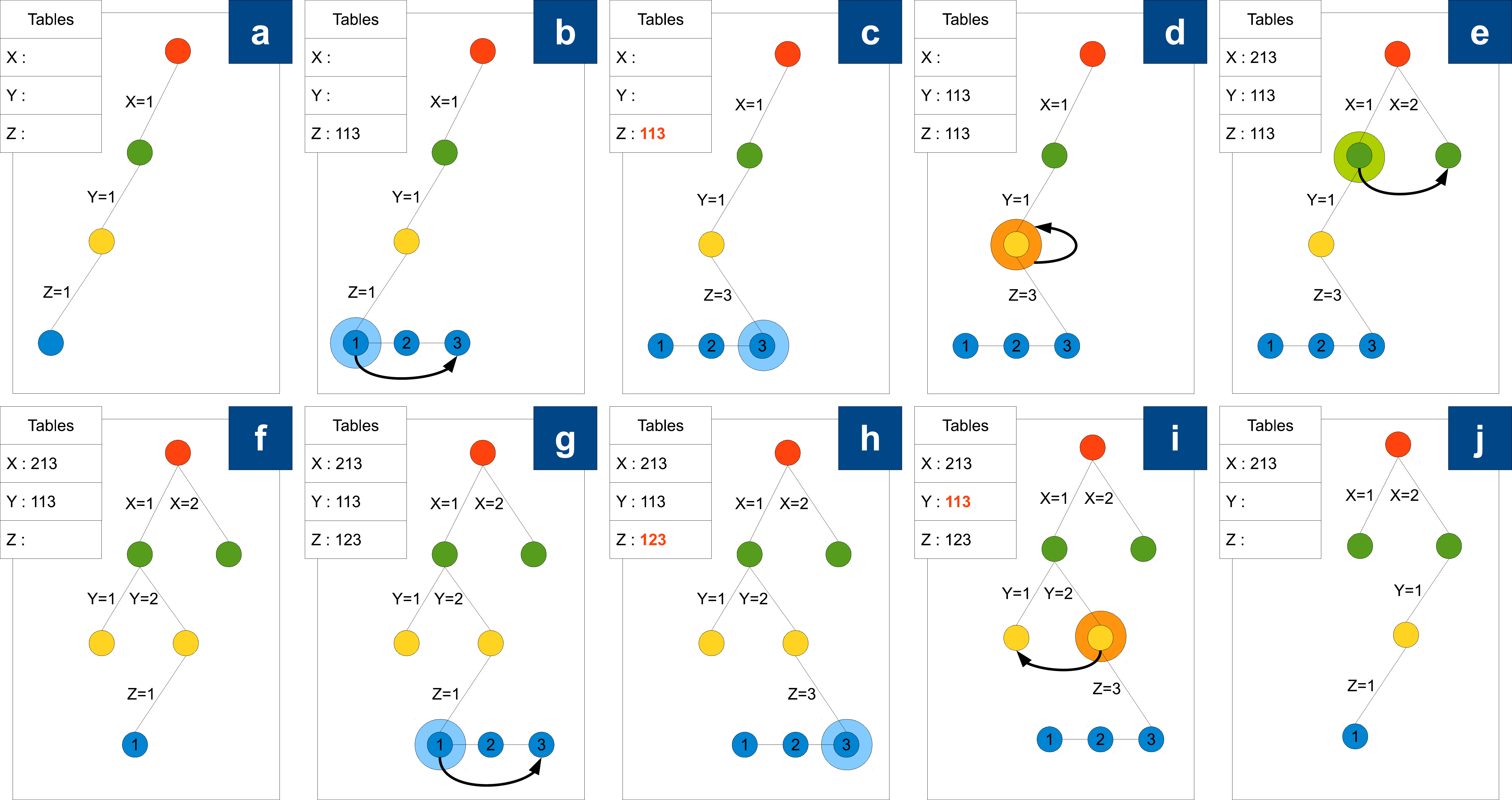}}
 \caption{Conga algorithm}
 \label{fig:conga}
\end{figure}

The resolution starts in Figure \ref{fig:conga}.{\bf a} (abbreviated \ref{fig:conga}.{\bf a}) with tuple $111$.  This tuple is checked for deviation for $Z$ and deviates, say, to $Z=3$ as shown in \ref{fig:conga}.{\bf b}, storing $113$ in Z's table.  Since the possible deviations for $Z$ are explored, we jump by {\em checkEndOfTable} to $113$ as in \ref{fig:conga}.{\bf c}.  This tuple is obviously a Nash candidate for $Z$, and is checked for deviation for the other players in reverse order.  It is thus checked for $Y$ in \ref{fig:conga}.{\bf d} and, in our example, is found stable for $Y$, storing $113$ in $Y$'s table.  Then the tuple is checked for $X$ in \ref{fig:conga}.{\bf e} and a deviation to $213$ is found, which causes $213$ to be stored in $X$'s table.  Then backtracking occurs in \ref{fig:conga}.{\bf f} at $Y$'s level, which resets the table for $Z$.  The next candidate is $121$.  In \ref{fig:conga}.{\bf g}, a deviation is found for $Z$ to $123$, in \ref{fig:conga}.{\bf h}, the tuple $123$ is considered, and checked for deviation for $Y$ in \ref{fig:conga}.{\bf i}.  But it is no use to perform a real deviation check because the value $113$ is found in $Y$'s table, meaning that a best response with $Y=1$ has been recorded for $X=1,Z=3$ and is thus a deviation for $123$.  Backtracking at $X$'s level occurs and the tables for $Z$ and $Y$ are emptied (\ref{fig:conga}.{\bf j}).  Note that the tables for Player $i$ are actually implemented by a tree whose nodes represent all players but $i$, with $i$'s best responses attached on the leaves.  Thus the search for deviation in the table does not depend on the number of recorded tuples but only on the number of players and is performed in $O(|\players|)$.

\begin{proposition}
 ConGa is correct and complete.
\end{proposition}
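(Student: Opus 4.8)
The plan is to establish soundness (every tuple placed in $Nash$ is a PNE) and completeness (every PNE is placed in $Nash$) separately, both resting on Theorem \ref{th:GGS}, which reduces being a PNE to being a best response for every player, and on Propositions \ref{prop:ios} and \ref{prop:inbr} to justify the table resets and the never-best-response backjump. For soundness I would first argue that \emph{deviation}$(t,i)$ (Algorithm \ref{alg:deviation}) returns exactly $br_i(t)$, the set of best responses of Player $i$ when the other players are fixed to $t_{-i}$, and that the fallback $d:=D_i$ taken in \emph{checkNash} (Algorithm \ref{alg:checkNash}) when $d=\emptyset$ is correct, since a player with no winning strategy has no beneficial deviation and hence best-responds with every value. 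Consequently the test $t_i\in d$ holds iff $t_i\in br_i(t)$, i.e. iff Player $i$ best-responds in $t$; a cached answer read from $BR[i]$ is a genuine best-response set, so using it instead of recomputing preserves this equivalence. A tuple is inserted into $Nash$ only when the recursion descends to $i=0$, i.e. only after $t_i\in d$ succeeds for \emph{every} player, and only through leaves on which $S$.propagate succeeded on a complete assignment, so the tuple also satisfies the hard constraints. By Theorem \ref{th:GGS} such a tuple is a PNE.

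For completeness I would fix an arbitrary PNE $p$ and follow the execution along the branch whose prefix agrees with $p$, the only ways $p$ could be lost being the two pruning mechanisms. Arc-consistency on the hard constraints removes only unsupported values and hence never removes a value occurring in the global solution $p$, so the whole prefix and suffix of $p$ survive in $A$. The delicate mechanism is the NBR backjump, for which the key invariant to establish is that $cnt[i]$ equals the number of contexts over players $>i$ (with the prefix fixed) not yet recorded, decremented exactly once per genuinely new context reaching Player $i$ in \emph{checkNash} (cache misses only); hence $cnt[i]\le 0$ forces all $\Pi_{j>i}|D_j|$ contexts to have been recorded in $BR[i]$ together with every best response of Player $i$ for each. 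If $p_i$ is explored before the break, $p$ is found in the subtree by induction on the level. If the loop breaks before exploring $p_i$, then context $p_{>i}$ has been recorded, and since \emph{deviation} ignores Player $i$'s own value it returned exactly $br_i(p)\ni p_i$; thus $p=(p_i,p_{-i})\in BR[i]$ lies in the unexplored region and is re-tested by \emph{checkEndOfTable} (Algorithm \ref{alg:checktable}), which re-runs \emph{checkNash} and, by soundness, inserts $p$. Any unexplored value \emph{absent} from $BR[i]$ is a never best response in the current subgame and, by Proposition \ref{prop:inbr}, cannot extend to a PNE on this branch, so pruning it is safe.

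Finally I would close the induction across branches. The tables $BR[j]$ and counters $cnt[j]$ are reset whenever a new value of an earlier player is chosen; by Proposition \ref{prop:ios} the best responses of the remaining players in distinct subgames are disjoint, so information cached in one branch is irrelevant to another and the per-branch NBR pruning never discards a value relevant to a PNE of a different branch. Hence every PNE is generated in its own branch, either at a leaf or by \emph{checkEndOfTable}, which together with soundness shows that $Nash$ equals the set of PNE.

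The main obstacle is the counter invariant for the NBR backjump: proving that $cnt[i]$ reaching $0$ genuinely certifies full coverage of the subspace $\Pi_{j>i}|D_j|$ and that every best response of Player $i$ is then present in $BR[i]$, so that the case split "explored / tabled-and-recovered / NBR-and-prunable" is exhaustive. This requires carefully tracking that decrements occur once per distinct context and that \emph{checkEndOfTable} re-verifies complete tuples rather than trusting the cache, the remaining steps being routine once these invariants are fixed.
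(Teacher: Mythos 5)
Your proof is correct and follows essentially the same route as the paper's: correctness is reduced to the soundness of the per-player deviation test of Theorem \ref{th:GGS} (whether answered from the table or by a direct solver call), and completeness follows from exhaustive traversal of the search space together with the soundness of the never-best-response pruning justified by Propositions \ref{prop:ios} and \ref{prop:inbr}. The paper's own proof is only a brief sketch of exactly this argument, so your counter invariant and the explicit case analysis (explored / recovered by \emph{checkEndOfTable} / safely pruned) simply supply detail the paper leaves implicit.
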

\begin{proof}
 Correctness comes from the correctness of PNE check (theorem \ref{th:GGS}).  A reported PNE has been checked for deviation for every player.  Either the tuple has been recorded in the table as deviation from another one, or had been directly checked by the solver against the player's goal.  Completeness is due to the traversal of the whole search space and soundness of never best response pruning. $\Box$
\end{proof}

\section{Experiments} \label{sec:xp}
\vspace*{-1mm}
We have performed experiments on classical games of the Gamut suite \cite{DBLP:conf/atal/NudelmanWSL04} and some games with hard constraints.  Results are summarized in Table \ref{tab:bench} in which the name of the game is followed by the number of players and the size of the domain. Gamut games are CG (Congestion Game), GTTA (Guess Two Third Average), LG(GV) (Location Game, Gamut version), MEG (Minimum Effort Game) and TD (Traveller's Dilemma).  Their description can be found in \cite{DBLP:conf/atal/NudelmanWSL04}.  The other games are LG(HC) (Location Game with Hard Constraints, example \ref{ex:location}) and CRAG (Cloud Resource Allocation Game, example \ref{ex:cloud}).

\begin{table}[h]
\vspace*{-2mm}
 \centering
 {\small
 \begin{tabular}{|l|r|r|r|r|r|r|r|r|r|r|}\hline
   \textbf{Name}  & \multicolumn{2}{c|}{\textbf{NF gen}} & \textbf{Gambit} & \multicolumn{3}{c|}{\textbf{enum1}} & \multicolumn{3}{c|}{\textbf{ConGa}} & \textbf{\#PNE}  \\\hline
 & {\bf \em Time}   & {\bf \em Size} & {\bf \em Time}  & {\bf \em Time} & {\bf \em \#Cand}  & {\bf \em \#Dev} & {\bf \em Time} &  {\bf \em \#Cand}  & {\bf \em \#Dev} &   \\
  \Xhline{2\arrayrulewidth}
   
   CG.7.15        & 253  & 5.1      & {\em MO}        & 70   & 1.7E+8 & 1.8E+8    & {\bf 27}  & 2.1E+7 & 1.3E+7       & 630  \\\hline
   CG.8.15        & 4613 & 89       & {\em MO}        & 1019 & 2.5E+9 & 2.7E+9    & {\bf 371} & 3.1E+8 & 1.9E+8       & 1680 \\\hline
   CG.9.15        & {\em TO} & --   & --              & 17361& 3.8E+10 & 4.2E+10  & {\bf 5880}& 4.9E+9 & 2.9E+9       & 5040 \\
   \Xhline{2\arrayrulewidth}

   GTTA.3.100     & 1   & 0.1       & 17              & 4   & 1.0E+6 & 1.3E+6     & {\bf 0}  & 1.0E+4 & 1.0E+4       & 1    \\\hline
   GTTA.4.100     & 113 & 1.7       & 1844            & 312 & 1.0E+8 & 1.3E+8     & {\bf 10} & 1.0E+6 & 1.0E+6       & 1    \\\hline
   GTTA.5.100     & {\em TO} & 205  & {\em MO}        & 4032& 1.0E+10& 1.3E+10    & {\bf 778}& 1.0E+8 & 1.0E+8       & 1    \\
   
  \Xhline{2\arrayrulewidth}   
   LG(GV).2.1000  & 1 & 0.01       & 134             & 339  & 1.0E+6 & 1.0E+6    & {\bf 6}   & 2.0E+3 & 1.5E+3       & 0    \\\hline
   LG(GV).2.2000  & 6 & 0.04       & 655             & 1441 & 4.0E+6 & 4.0E+6    & {\bf 31}  & 4.0E+3 & 3.5E+3       & 0    \\\hline
   LG(GV).2.3500  & 17 & 0.1       & 5337            & 6789 & 1.2E+7 & 1.2E+7    & {\bf 93}  & 7.0E+3 & 6.0E+3       & 0    \\\hline
   LG(GV).2.5000  & 34 & 0.2       & 7786            & 20000& 2.5E+7 & 2.5E+7    & {\bf 201} & 1.0E+4 & 9.0E+3       & 0    \\\hline
   LG(GV).2.20000 & 552& 3.7       & {\em MO}        & {\em TO}  & -- & --       & {\bf 3578}& 4.0E+4 & 3.9E+5       & 0    \\
  \Xhline{2\arrayrulewidth}      
   MEG.3.100      & 1 & 0.1        & 13              & 0    &1.0E+6 & 1.0E+6     & {\bf 0}   & 1.9E+4 & 1.5E+4       & 100  \\\hline
   MEG.4.100      & 91& 1.9        & 1555            & 28   &1.0E+8 & 1.0E+8     & {\bf 6}   & 1.9E+6 & 1.3E+6       & 100  \\\hline
   MEG.5.100      & {\em TO}& 241  & {\em MO}        & 2082 &1.0E+10 & 1.0E+10   & {\bf 403} & 1.9E+8 & 1.2E+8       & 100  \\\hline
   
   MEG.30.2       & 8784& 91       & {\em MO}        & {\bf 423}&1.1E+9&2.1E+9   & 503       & 5.4E+8 & 1.1E+9       & 2    \\\hline
   MEG.35.2       & TO & --        & --              & {\bf 15933}&3.4E+10&6.9E+10& {\em TO} & --     & --           & 2    \\
    \Xhline{2\arrayrulewidth}   
   TD.3.99        & 3  & 0.1       & 14              & 0  & 9.7E+5 & 9.8E+5      & {\bf 0} & 1.9E+4   & 1.5E+4       & 1    \\\hline
   TD.4.99        & 76 & 1.9       & 1572            & 26 & 9.6E+7 & 9.7E+7      & {\bf 7} & 1.9E+6   & 1.3E+6       & 1    \\\hline
   TD.5.99        & 8930 & 119     & {\em MO}        & 2028 & 9.1E+9 & 9.6E+9    & {\bf 446} & 1.8E+8 & 1.2E+8       & 1    \\
 
   \Xhline{2\arrayrulewidth}
  
   CRAG.7.9       & N/A  & N/A      & N/A             & 323  & 4.7E+6 & 5.3E+6    & {\bf 57}  & 1.0E+6 & 5.9E+5 	  & 1    \\\hline
   CRAG.8.9       & N/A  & N/A      & N/A             & 3300 & 4.2E+7 & 4.8E+7    & {\bf 540} & 9.5E+6 & 5.3E+6       & 1    \\\hline
   CRAG.9.9       & N/A  & N/A      & N/A             & -- & 3.8E+8 & 4.3E+8      & {\bf 5022}& 4.3E+7 & 4.8E+7       & 1    \\
   \Xhline{2\arrayrulewidth}
   LG(HC).4.30    & N/A  & N/A      & N/A             & 26   & 6.5E+5 & 8.0E+5    & {\bf 6} & 1.4E+5 & 4.4E+4         & 24   \\\hline
   LG(HC).5.30    & N/A  & N/A      & N/A             & 778  & 1.7E+7 & 2.1E+7    & {\bf 257}  & 4.1E+6 & 1.2E+5      & 240  \\\hline
   LG(HC).6.30    & N/A  & N/A      & N/A             & {\em TO} & -- & --        & {\bf 13180} & 1.1E+8 & 3.2E+7     & 2160 \\
   \Xhline{2\arrayrulewidth}
   
 \end{tabular}
 }
 \caption{Results for Gamut and other games}
 \label{tab:bench}
\end{table}

For each instance, we have compared ConGa to the game solver Gambit \cite{Gambit} and to a base solver called {\em enum1} (Algorithm \ref{alg:naive-solve}).  This solver works like Gambit by examining each tuple but the only difference is that it uses the compact Constraint Game representation.  All experiments have been executed on a 48-cores AMD Opteron 6174 with 4-processors at 2,2 GHz with 256 GB of RAM.  In table \ref{tab:bench}, times are given in seconds and the number {\em aE+b} is equal to $a \times 10^b$.  

The experiment on Gambit is divided into two steps: we first generate the normal form matrix (column NF gen) and then we launch the solver {\em gambit-enumpure} on the normal form to find all PNE.  We have measured the time needed to generate the normal form (with a time-out of 9 000 seconds) and its size (in GB), then the time required for the game to be solved by Gambit (with a time-out of 20 000 seconds).  TO stands for {\em Time Out}, MO for {\em Memory Out} and "--" means there is no information (for example, if the generation times out, it is not possible to launch the resolution).  As expected, the size of the normal form soon becomes intractable and exceeds the capacity of Gambit, although it can handle matrices of 2 GB.

For {\em enum1} and {\em ConGa}, we have measured the time needed to solve an instance (with also a time-out of 20 000 seconds), the number of candidate profiles and the number of deviation checks performed when checking whether a candidate is a PNE.  From a simple reasoning, the number of candidate for {\em enum1} is simply $|D^{V_c}|$, and the number of checks is comprised between the number of candidates and an upper bound of $|D^{V_c}| \times |\players|$.  Not surprisingly, we see that {\em ConGa} prunes most of the time a large part of the search space, mainly thanks to NBR detection.  But interestingly, most of the time, it also saves deviation checks, meaning that the solution is found in the tables before a check is performed.  A notable counterexample is the Minimum Effort Game with a domain of size 2 (MEG.30.2 and MEG.35.2) for which neither the tables or the NBR detection are working because the domains are too small.  Note that games with hard constraints are not implementable in Gambit, indicated by N/A for {\em not applicable}.  In all other benchmarks, Conga outperforms both Gambit and {\em enum1} by one order of magnitude or even more.

A potential problem of Conga could be that the size of the tables grows too much.  It is easy to build an example for which the first player will get a table of exponential size: the game with no constraint for each player.  In this game  each profile is a PNE and is thus stored in the table of the first player.  However, this behavior has not been observed in practice in any of our examples.  Tables rather stay of reasonable size, either because they belong to lower level players and they are reset often or because many profiles do not reach high level players.

\vspace*{-3mm}

\section{Conclusion} \label{sec:conclusion}
Probably the most important feature of Constraint Games is that they provide a new compact yet natural encoding to game models.  In this paper, we propose the first complete solver for constraint games based on a fast computation of Nash consistency and pruning of Never Best Responses.  We show that these techniques are yet able to outperform the existing state-of-the-art solver Gambit.  But they also open directions for further investigations on efficient algorithmic techniques to compute Nash equilibria.   Future work include the use of heuristics, graphical constraint games and other solution concepts like Pareto equilibria.


\bibliographystyle{splncs03}
\bibliography{cpgames}

\end{document}